\documentclass[conference]{IEEEtran}
\IEEEoverridecommandlockouts
\IEEEpubid{\begin{minipage}{\textwidth}\ \\[12pt]
\textcopyright~2026 IEEE. Personal use of this material is permitted. Permission from IEEE must be obtained for all other uses, in any current or future media, including reprinting/republishing this material for advertising or promotional purposes, creating new collective works, for resale or redistribution to servers or lists, or reuse of any copyrighted component of this work in other works.
\end{minipage}}
\usepackage{cite}
\usepackage{amsmath,amssymb,amsfonts}
\usepackage{comment}
\usepackage{algorithm}
\usepackage{algpseudocode}
\usepackage{graphicx}
\usepackage{textcomp}
\usepackage{xcolor}
\usepackage{amsthm}
\usepackage{url}
\usepackage{multirow}
\usepackage{booktabs}
\usepackage{ipdps26repro}

\newtheorem{lemma}{Lemma}
\newtheorem{definition}{Definition}

\def\BibTeX{{\rm B\kern-.05em{\sc i\kern-.025em b}\kern-.08em
    T\kern-.1667em\lower.7ex\hbox{E}\kern-.125emX}}
\begin{document}

\title{SMART-MIG: A Learning Framework for Scalable and Energy-Efficient GPU Scheduling\\

\author{\IEEEauthorblockN{Wenqing Yu*\thanks{*The first two authors contributed equally}, Neel Karia*, Tanvi Hisaria, Clifford Stein}
\IEEEauthorblockA{\textit{Columbia University, USA} \\
\{wy2462,nmk2154,th2720,cs2035\}@columbia.edu}
\and
\IEEEauthorblockN{Olivier Tardieu, Asser Tantawi}
\IEEEauthorblockA{\textit{IBM TJ Watson Research Center, USA} \\
\{tardieu,tantawi\}@us.ibm.com}
}

}
\maketitle

\begin{abstract}
The emergence of Multi-Instance GPU (MIG) technology enables us to run smaller machine learning models on partitions of a GPU rather than the entire device, thus improving utilization and reducing energy consumption, albeit with potential performance trade-offs. Meanwhile, the growing energy demands of GPU-equipped data centers motivate the development of online partitioning and scheduling schemes that not only ensure fast job processing but also achieve high energy efficiency. However, achieving energy–tardiness efficiency with manageable algorithmic complexity in large-scale scheduling remains a great challenge, due to the dual objectives of deciding on the GPU partitions and scheduling jobs onto the slices of the heterogeneous partitions. To address this challenge, we propose SMART-MIG, a parallel computing system that combines Mean-Field Multi-Agent Reinforcement Learning (MF-MARL) for large-scale MIG repartitioning with tailored heuristic algorithms for job scheduling. We demonstrate that the complexity of the repartitioning component remains constant even as the number of jobs and GPUs increases. We also establish theoretical lower bounds on energy consumption and tardiness to rigorously benchmark system performance. Finally, extensive experiments show that SMART-MIG improves the energy–tardiness efficiency by $18\%$ compared to its corresponding static partitioning counterpart, while being only $27\%$ above the theoretical lower bound on energy consumption.
\end{abstract}

\begin{IEEEkeywords}
Scheduling, Energy-Efficiency, Repartitioning, Tardiness, GPU, MIG, PPO, Mean Field, Multi-Agent Reinforcement Learning, A100, Algorithms, Lower Bounds
\end{IEEEkeywords}

\section{Introduction}

    In recent years, the increasing adoption of artificial intelligence (AI) and machine learning (ML), especially driven by large language models (which require large-scale parallel processing), 
    has advanced the deployment of GPU clusters. These clusters form an important component of modern data centers, and they often run longer-running training jobs and latency-sensitive inference jobs in a co-located manner to increase utilization and improve throughput. According to \cite{OffuttZhu2025DataCenters}, data centers accounted for $4.4\%$ of total electricity consumption in the United States in 2023 (excluding cryptocurrency), and this number is expected to increase to $12\%$ in 2028. About $40\%$ of this energy was used by computing systems such as CPUs and GPUs, and another $40\%$ was used for cooling. The tremendous scale of this issue creates a pressing need for a GPU scheduling framework that balances energy efficiency with SLO compliance (there is often a trade-off) 
    to create a sustainable, high-performance AI infrastructure for the future.
    \IEEEpubidadjcol
    There are several reasons for the high energy consumption of AI workloads on GPUs, including (i) underutilization of GPUs, (ii) fragmented or poor job allocations, (iii) fluctuations in demand, and (iv) the requirements of complex cooling systems. 
    In this work, some of our scheduling algorithms are designed to address the first two issues by intelligently assigning jobs to slices, and our RL-based repartitioning scheme helps resolve the third issue by learning patterns in arrival rates.

    For multi-tenant environments, NVIDIA recently introduced Multi-Instance GPU (MIG) technology across several of its GPUs, including the A100 and the H100. MIG technology allows a GPU to be partitioned into up to seven isolated instances or slices, each with its own compute, cache, and memory resources. A scheduler for MIG has to make two intertwined sets of decisions: (i) which job to allocate to which slice of which MIG partition, and (ii) when and how to repartition a GPU to be more suitable for the job mix. Repartitioning adds a temporal dimension that can reduce energy at the cost of possible job disruption. 

    Previous work on GPU scheduling has emphasized throughput, fairness, utilization, and SLOs \cite{ye2024deep}, while energy studies rely largely on DVFS or cluster-level schedulers that treat GPUs as uniform units. Such approaches overlook MIG, which enables dynamic resource splitting, which is particularly useful for today’s smaller models like Phi-4 \cite{abdin2024phi} and Gemma-2 \cite{team2024gemma}. Recent studies \cite{li2022miso, tan2021serving, zhang2024miger, tang2025pcie} explore MIG-based placement and repartitioning, demonstrating the feasibility of runtime adaptation, but primarily target utilization and performance rather than jointly addressing energy and performance trade-offs.
    
    The repartitioning of MIGs is highly complex as it must consider online decision-making, the nonlinear throughputs of workloads, and the trade-offs between repartitioning overhead and potential performance gains, all while managing energy efficiency. Furthermore, the nature of data centers necessitates scheduling at a massive scale -- a large number of GPUs must handle extremely diverse jobs characterized by heterogeneity, high uncertainty, and large problem sizes. The challenge is to deliver high-quality solutions to this NP-hard problem (since restricted versions of it are also NP-hard \cite{sitters2005complexity}) in seconds.

    Previous studies \cite{10.1145/3688351.3689156, you2023zeus} have relied mainly on simple heuristics or limited the number of reconfigurable MIGs to allow exhaustive search or basic linear programming (LP) formulations. However, these approaches are inadequate in highly uncertain environments, as they cannot effectively capture workload characteristics, adapt to diverse configurations, or scale to large jobs and GPU pools without severe performance degradation and unacceptable computation time. A recent paper has explored reinforcement learning (RL) to address online scheduling, using Deep Q-Networks to learn task characteristics and maintain service quality \cite{lipe2025energy}. However, because the computational complexity of traditional RL scales exponentially with the number of GPUs and jobs, these approaches can only handle very small-scale scenarios (such as scheduling on a single GPU or a few GPUs).
    
    Motivated by these challenges, we pose a fundamental research question: \textbf{Is it possible to design an efficient, multi-objective, and scalable MIG-based scheduling framework for industrial-scale computing that can adapt to uncertainty and real-time demands, while producing high-quality global optimization solutions in seconds?}

    We address this challenge by introducing \textbf{SMART-MIG:} \textbf{S}cheduling with \textbf{M}ean-field Multi-\textbf{A}gent \textbf{R}L using \textbf{T}op-k sampling for \textbf{MIG}, an intelligent real-time scheduling system. The SMART-MIG pipeline is shown in Figure \ref{fig:pipeline}. It builds on the key observation that all MIGs are interchangeable; the scheduling process only needs to determine how many MIGs are assigned to each valid configuration, rather than tracking the details of each MIG. Accordingly, the input to the central controller is represented as a distribution of GPU and job states, while the output is a randomized policy that can be uniformly applied to all MIGs. 
    
    This formulation bounds the size of the input–output space, making it much less dependent on the number of GPUs and jobs. Within this tractable space, the strong representational power of neural networks enables the learning of highly efficient, generalizable RL policies for repartitioning MIGs, regardless of system scale. Once the configuration of each MIG is decided, since the subsequent multi-machine job scheduling problem has a complex action space, we employ a carefully designed efficient heuristic algorithm that achieves competitive performance.

    To rigorously evaluate system performance, we investigate a theoretical lower bound for the MIG scheduling problem in terms of energy and tardiness, where tardiness is defined as the maximum of $0$ and the difference between the job completion time and the deadline. (In other words, jobs that end before  their deadlines have $0$ tardiness, while jobs that are late by $x$ units of time have a tardiness of $x$.)  We then compare algorithmic results with their corresponding bounds, substantially reducing evaluation biases arising from dataset differences. This contribution establishes \textbf{a standardized benchmark} for fair and consistent evaluation.

    In summary, our contributions are as follows.
    \begin{enumerate}
        \item We observe how energy and tardiness vary with the number of GPUs and note diminishing returns as the number of GPUs increases. This fact can help estimate the number of GPUs needed (Subsection \ref{sec:tradeoff}). 
        \item We propose \textbf{SMART-MIG}, a large-scale MIG scheduling framework that integrates methodologies from ML and OR. The framework consists of two parts:
        \begin{enumerate}
            \item We design EDF-based scheduling algorithms to assign jobs to MIG slices across multiple GPUs (without repartitioning), leveraging job throughput characteristics and MIG’s concave power curve (Subsection \ref{sec:algo}).
            \item We employ Mean-Field Multi-Agent Reinforcement Learning (MF-MARL), suitable for large-scale problems, together with Top-k sampling to repartition GPUs (Subsection \ref{sec:controller}).
        \end{enumerate}
        \item We develop lower bounds for energy and tardiness that serve as references for evaluating our scheduling policies (Section \ref{sec:lower_bounds}).
        \item We conduct extensive experiments against theoretical lower bounds to validate the effectiveness of our method. Under high job arrival rates, the traditional no-partition GPU model rapidly deteriorates in performance. In contrast, our static repartitioning scheduler achieves robust results, operating within $\sim 30\%$ of the lower bounds of the energy and tardiness on average. In addition, incorporating dynamic repartitioning based on MF-MARL yields an additional $18\%$ improvement in energy–tardiness efficiency. Importantly, our method demonstrates consistent advantages across diverse workloads: even under low arrival rates, it reduces tardiness by $47\%$ compared to the no-MIG baseline (Section \ref{sec:exp}).
    \end{enumerate}
    \begin{figure}[ht] 
    \centering 
    \includegraphics[width=\linewidth]{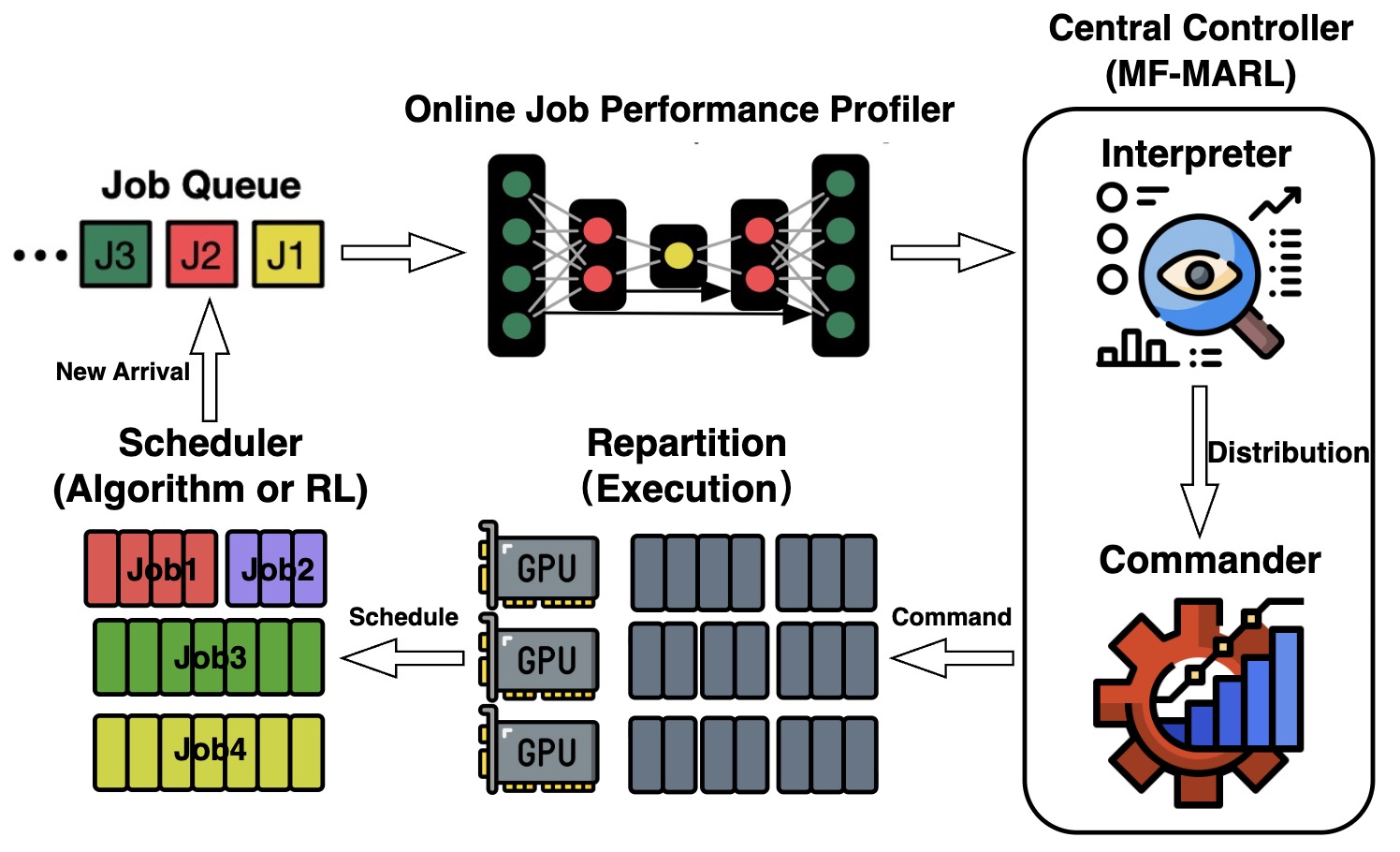} 
    \caption{Method Pipeline Overview. For each task, we extract the job profile, then deploy the MF-MARL model at the central controller to read the state distributions of jobs and MIGs. The controller issues repartitioning instructions, after which a scheduling algorithm assigns jobs to different slices.} 
    \label{fig:pipeline} 
\end{figure}

\section{Related Work}
In this section, we further elaborate on the related work by categorizing it and noting our added contributions.

    \subsection{Energy Aware GPU Scheduling for AI/ML Workloads}
    GPU scheduling has been widely studied in both single-node and cluster contexts. Previously, Gandiva \cite{xiao2018gandiva} introduced cluster scheduling for heterogeneous workloads with time-sharing and job packing to improve utilization by making use of job migration, while recent surveys such as \cite{ye2024deep} highlight scheduling approaches for AI training and inference that focus on meeting performance targets (SLOs) and efficient job placement in modern GPU data centers. 
    DVFS \cite{tang2019impact} and PowerFlow \cite{gu2023energy} reduce energy consumption by integrating energy budgets into scheduling, enabling faster job completion within power limits. However, most of these methods treat GPUs as static resources and overlook dynamic partitioning capabilities like NVIDIA’s MIG. Our work builds on these ideas by addressing tardiness while explicitly prioritizing energy efficiency in dynamically reconfigurable GPU environments.
    
    \subsection{MIG-Aware Scheduling and Repartitioning}
    There has been extensive recent research on MIG-based scheduling. MISO \cite{li2022miso}, for example, makes use of Multi-Process Service (MPS) to decide the best MIG configurations by profiling the job throughputs before making costly switches. MIG-Serving \cite{tan2021serving} looks at deep neural network serving as a scheduling challenge where GPUs can be reshaped on the fly with a two-phase algorithm. 
    MIGER \cite{zhang2024miger} combines MIG and MPS within a GPU to better handle a mixture of online and offline jobs. Other studies \cite{tang2025pcie} explore scheduling while accounting for PCIe bandwidth limits or allowing dynamic layout changes, while mapping out MIG’s valid setups and practical repartitioning constraints. 
    
    \cite{villarrubia2025leveraging} introduces a 3-phase approach called FAR that aims to optimize for makespan, whereas 
    \cite{turkkan2024optimal} provides a method to first assign jobs to slices and then rearrange them as needed (in addition to repartitioning) to minimize the number of GPUs. 
    \cite{lee2024parvagpu} also optimizes to minimize the number of GPUs while maintaining the SLO requirements. 

    \subsection{RL for Resource Management}
    RL has often been used in resource management, starting with \cite{mao2016resource}. Mean Field Multi Agent RL gives tractable approximations for large populations of agents (here, GPUs) \cite{yang2018mean}, and has been used for mean field games \cite{subramanian2019reinforcement}, while Proximal Policy Optimization (PPO) provides stable on-policy learning \cite{schulman2017proximal}. We use them together to obtain scalable, feedback-driven repartitioning policies in a collection of MIGs. 

    \subsection{Lower Bounds in Scheduling}
    Lower bounds help us benchmark the performance of our scheduling policies. There are several classical results for bound flow time and tardiness under various scheduling policies \cite{pinedo2012scheduling}. We derive new lower bounds for energy and tardiness, where the former reduces to makespan minimization, and the latter employs a time-indexed mixed-integer program (MIP).

\section{Background}
In this section, we briefly explain the existing technology and techniques used in this paper.
\subsection{Multi-Instance GPUs}
We first introduce Multi-Instance GPU, a form of resource sharing in GPUs as an alternative to Multi-Process Service (MPS). We focus on the A100-40 GB GPUs, but our techniques can be extended to other similar MIG-enabled GPUs.

NVIDIA’s Multi-Instance GPU (MIG), available on the A100-40GB GPU, partitions the device into up to seven independent isolated slices. A MIG-enabled GPU supports several partitioning configurations by fusing slices into groups of sizes 1, 2, 3, 4, or 7. Each slice is allocated its dedicated hardware resources, including streaming multiprocessors, memory, and cache, ensuring predictable performance and strong fault isolation. This makes MIG essential for multi-tenant cloud environments and workloads that require guaranteed Quality of Service (QoS), as a fault in one instance does not impact others. In this work, we focus on configurations that fully utilize the 40GB of memory on the A100-40GB (Figure \ref{fig:mig}), while ignoring redundant permutations of identical slices in different orders.

\begin{figure}[ht] 
    \centering 
    \includegraphics[width=\linewidth]{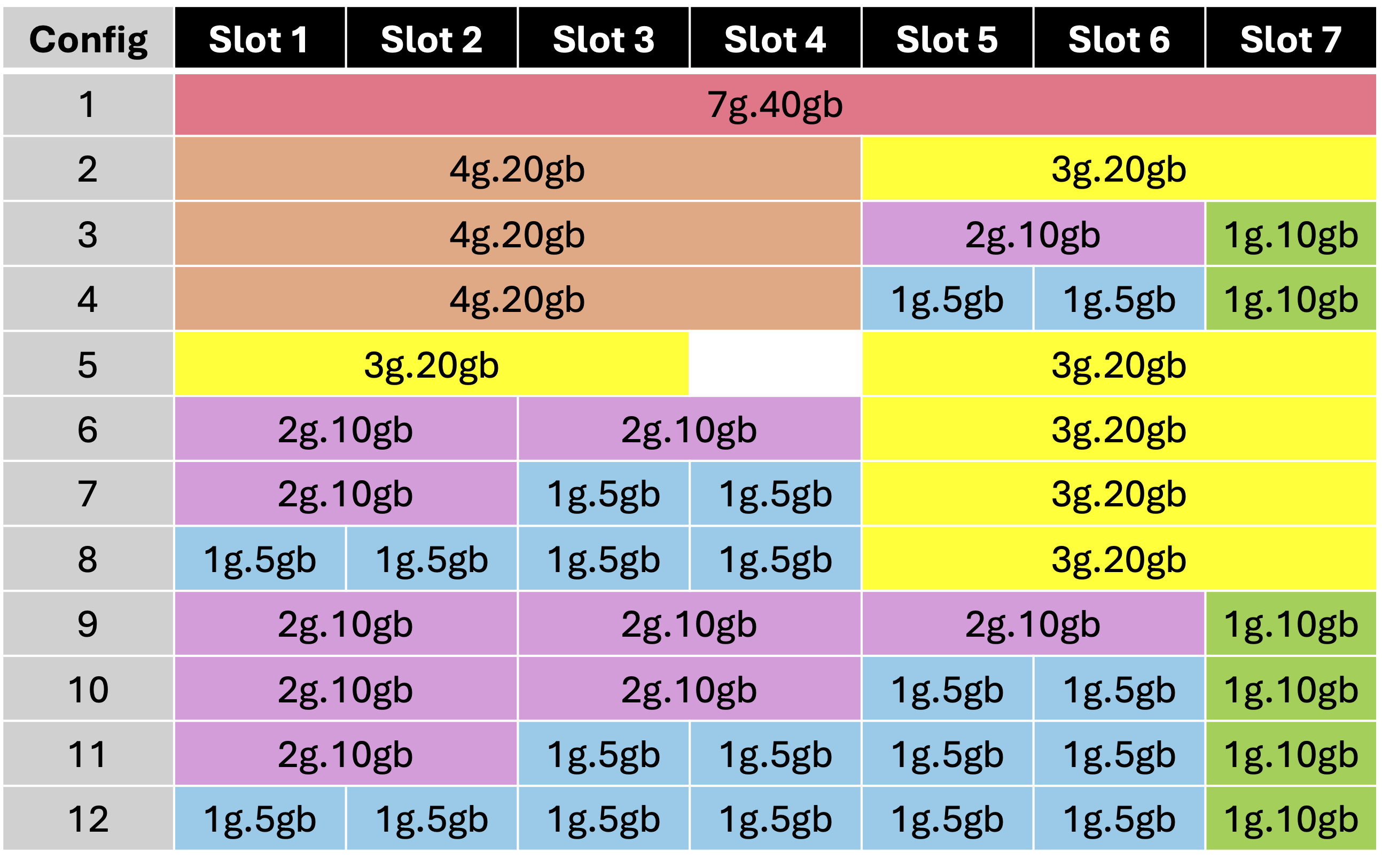} 
    \caption{Nontrivial Configurations of A100-40GB MIG} 
    \label{fig:mig} 
\end{figure}

Complementing MIG is the Multi-Process Service (MPS), a software technology that operates at a different level. While MIG constructs safe hardware slices for different users or large tasks, MPS can be enabled within a single MIG slice or on the entire GPU. It allows multiple CUDA processes from a single user to share that instance's resources concurrently, reducing kernel launch overhead and maximizing utilization.

\subsection{Proximal Policy Optimization}
In the actor-critic framework, the critic estimates the value function to guide updates, while the actor learns the policy that maps states to actions. Based on this framework, the policy-gradient RL algorithm Proximal Policy Optimization (PPO) optimizes a clipped surrogate objective function to iteratively improve the agent’s policy. This approach balances exploration and exploitation while avoiding excessively large policy updates that could destabilize training.

\subsection{Multi-Agent Reinforcement Learning}

We consider a cooperative MARL setting with $N$ agents and a discount factor $\gamma$. At each time step $t$, every agent observes the global state $\boldsymbol{s}_t\in S$, where $S=S_1\times\cdots\times S_N\times S_{sys}$ and $S_i$ denotes the local state of agent $i$, and $S_{sys}$ is the system state. Each agent then selects an action from its local action space $A_i$, which yields the joint action $\boldsymbol{a}_t = (a_t^1,\cdots, a_t^N)\in A=A_1\times\cdots\times A_N$. The system transitions according to a stochastic kernel, and all agents share a joint reward $r_t^i(\boldsymbol{s}_t,\boldsymbol{a}_t)$. The objective is to find a Pareto-optimal joint policy that maximizes the long-term expected reward:

\begin{equation}\label{eq:exp_rew}
\boldsymbol{\pi}^{\star} = \arg\max_{\boldsymbol{\pi}}\mathbb{E}\!\left[ \sum_{t=0}^{\infty} \gamma^{t} r_{t}(\boldsymbol{s}_t,\boldsymbol{a}_t) \right].
\end{equation}

The joint state and action spaces in MARL grow exponentially with the number of agents $N$,   because:
\begin{equation}\label{eq:states}
|S| \times |A| = \left( \prod_{i=1}^{N} |S_{i}| \right) \times \left( \prod_{i=1}^{N} |A_{i}| \right)\times|S_{sys}|.
\end{equation}

This is the well-known curse of dimensionality. Such a rapidly growing state-action space makes data sampling extremely difficult (there can be millions of states and thousands of agents in practice), and a huge amount of data is required for policy evaluation, which makes these algorithms extremely difficult to scale. In practical data center scheduling, where large numbers of MIGs and jobs must be managed, this challenge motivates us to adopt advances from operations research, specifically Mean Field MARL \cite{yang2018mean}.

\subsection{Mean Field MARL}\label{sec:mf-marl}
If all agents are identical, indistinguishable, and interchangeable, that is, we care only about whether an appropriate action is taken, rather than which agent takes it, then we can adopt the Cooperative Mean Field MARL (MF-MARL) framework. In this framework, all agents share the same state space $S$ and action space $A$. The anonymity of agents allows us to represent their state and action information as distributions, i.e., the number of agents occupying each state. Each agent's decision depends on the other agents only through the empirical distribution of their state–action pairs. Let $\mathcal{P}(S)$ and $\mathcal{P}(A)$ be the probability measure spaces over the state space $S$ and the action space $A$, respectively. The empirical distribution of the states is $\mu_{t}(s) = \frac{\sum_{j=1}^{N} \mathbf{1}\!\left( s_{t}^{j} = s \right)}{N} \in \mathcal{P}(S)
$ and the empirical distribution of the actions is $\nu_{t}(a) = \frac{\sum_{j=1}^{N} \mathbf{1}\!\left( a_{t}^{j} = a \right)}{N} \in \mathcal{P}(A)
$.

Formally, at each timestep $t$, the central controller observes the state distribution $\mu_t$ and outputs a policy $\pi_t(\cdot,\mu_t)$, which is a mapping from $S$ to $\mathcal{P}(A)$. Since all agents are interchangeable, we consider the representative agent who selects an action $a_t\sim \pi_t(s_t,\mu_t)$ based on its local state $s_t$, receives a reward $r_t(s_t,\mu_t,a_t,\nu_t)$, and transitions to the next state $s_{t+1}\sim P_t(s_t,\mu_t,a_t,\nu_t)$. Here, both the transition probability $P_t$ and the reward $r_t$ depend on the state and action distributions $\nu_{t}(\cdot) := \int_{\mathcal{S}} \pi_{t}(s, \mu_{t})(\cdot)\,\mu_{t}(s)\, ds$. 

For notational simplicity, let $h_t(\cdot)=\pi_t(\cdot,\mu_t)$, and let $\mathcal{H}:=\{h:S\to \mathcal{P}(A)\}$ be the function space of $h_t(\cdot)$. The policy learned by the central controller thus belongs to the space
\begin{equation}\label{eqn:space}
    \Pi := \left\{ \pi = \{\pi_{t}\}_{t=0}^{\infty} \;\middle|\; \pi_{t} : \mathcal{P}(S) \to \mathcal{H} \text{ is measurable} \right\}.
\end{equation}

Importantly, the input of the central controller is the space $\mathcal{P}(S)$, and the output is $\mathcal{H}=\mathcal{P}(A)^{|S|}$, which implies that the dimensionality of the input–output mapping is independent of the number of agents. In fact, previous work \cite{gu2021mean} shows that under Pareto-optimality criteria, cooperative MF-MARL can approximate cooperative MARL up to an error of $\mathcal{O}(\frac{1}{\sqrt{N}})$.

\section{Problem Formulation and Methodology}

Let $J$ be a collection of $n$ training or inference workloads (represented as integers in $[n]$). They arrive in an online fashion such that a workload $j \in J$ has a release time of $r_j$, a soft deadline of $d_j$, and $p_{j,k}$ is the processing time of workload $j$ on a MIG slice of size $k$, where $k \in \{1, 2, 3, 4, 7\}$. The processing times of the online jobs can be obtained from a job performance estimator such as MISO \cite{li2022miso}, but in this work, they are obtained from a probability distribution (along with the deadlines). 
There is a collection of $m$ MIG-enabled A100-40GB GPUs, indexed in $[m]$. The goal is to decide which job to allocate to which GPU and to which slice at any point in time (with preemption being permitted). Given the partitioning and scheduling strategy, we obtain the total energy consumption $e$ using the concave power characteristics of MIG on A100-40GB with a power cap of 250W. The precise values are $P=[40, 119, 160, 205.3, 243.9, 247.7, 248.5, 248.5]$, where the $i$-th entry in this $0$-indexed list approximately corresponds to $i$ MIG slices being utilized \cite{lipe2025energy}. The average tardiness $t$ is defined as $(\sum_{i=j}^n \max(C_j-d_j,0))/n$, where $C_j$ is the completion time of job $j$. This problem models a data center where jobs arrive in real time, and decisions must be made to repartition GPUs and schedule jobs across GPU partitions to optimize both energy and performance.

Most workloads consist of linear and sublinear (we consider capped jobs as sublinear) throughput jobs \cite{tan2021serving, zhang2023migperf} scaling with slice size, as shown in Figure \ref{fig:thoughput}. An effective GPU repartitioning and scheduling strategy can allocate sublinear jobs appropriately, avoiding oversized slices that waste resources and undersized slices that cause delays. To analyze SMART-MIG across different sublinearity levels of throughput, we define the degree of sublinearity for a group of jobs as the average of their throughputs across 1g, 2g, 3g, 4g, and 7g slices, effectively converting the group into a single ``average job". If the group were fully linear, its throughput curve would form a straight line with a slope equal to the throughput on a 1g slice; we denote this slope as $k_{linear}$. For any real group of jobs, we fit a line to its actual throughput curve by minimizing the mean squared error and denote the slope of this fitted line by $k_{sublinear}$. Clearly, for the same group of jobs, the slope of the fitted line is always smaller than the slope of the linearized throughput line. We denote $ss=\frac{k_{sublinear}}{k_{linear}}\in [0,1]$ as the sublinearity score for the group of jobs.

\begin{figure}[ht] 
    \centering 
    \includegraphics[width=0.7\linewidth]{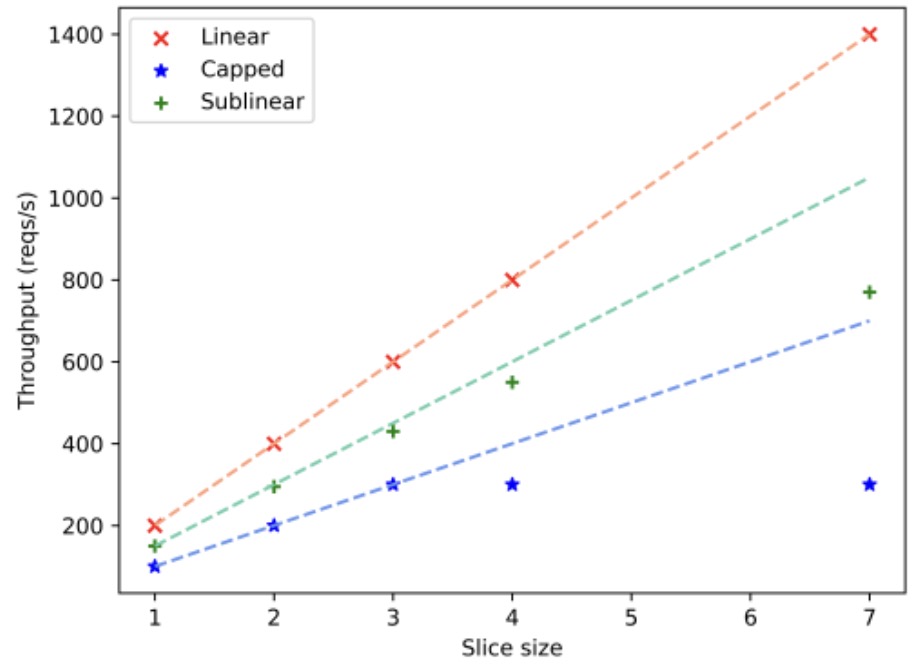} 
    \caption{Thoughput curves of different types of jobs} 
    \label{fig:thoughput} 
\end{figure}

The fact that the power consumption of MIGs increases concavely with the number of active slices \cite{lipe2025energy} motivates us to maximize GPU utilization to reduce energy costs. These insights demonstrate the feasibility of designing a system that optimizes both energy and tardiness. To this end, our aim is to minimize the $ET$ value across simulations, as defined in \cite{lipe2025energy} as:
\begin{equation} 
\label{eq:et}
ET = \frac{1}{N}\sum_{k=1}^N\frac{ae_k+t_k}{a+1},
\end{equation}
where $N$ is the number of simulations, $e_k,t_k$ are the total energy and the average tardiness of the simulation $k \in [N]$. Let $\bar{e}$ and $\bar{t}$ be the average of the total energy and average tardiness over $N$ simulations, respectively. $a$ is a scaling parameter which is usually set to $\frac{\bar{t}}{2\bar{e}}$ as in \cite{lipe2025energy}. 

\subsection{Tardiness-Energy Curves with Increasing Number of GPUs}\label{sec:tradeoff}
The first point to consider is the number of GPUs required to support the maximum \emph{load} on the GPU system while preventing resource waste. We experiment by incrementally increasing the number of GPUs. We observe the Pareto curves of $\bar{t}$ versus $\bar{e}$ as the number of GPUs increases at various arrival rates (Figure \ref{fig:pareto}) by scheduling jobs using any given algorithm. 
\begin{figure}[ht]\label{alg:algorithm1} 
    \centering 
    \includegraphics[width=0.9\linewidth]{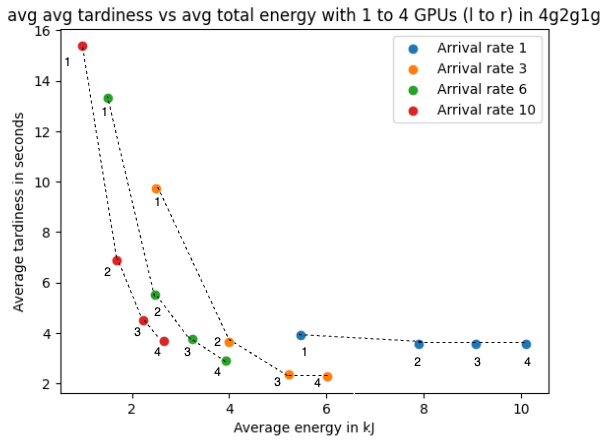} 
    \caption{Tardiness-Energy curves with varying arrival rates} 
    \label{fig:pareto} 
\end{figure}
As expected, we observe diminishing returns as the number of GPUs increases. In particular, as we continue to increase the GPU count, energy keeps increasing, but with smaller tardiness savings. The number of GPUs required for scheduling (which increases with increasing arrival rate) can be estimated from the knee of the plots (similar to the idea in \cite{kim2022paris}, but in a different context).

\subsection{SMART-MIG Overview}
The general workflow of SMART-MIG is as follows. Whenever a new job arrives, we first obtain its profile, which includes the expected completion time in slices of different sizes. In a typical workflow, a throughput profiler such as MISO \cite{li2022miso} can be used; here, we generate them from distributions derived from real-world data, as explained in Section \ref{sec:exp}. Next, when a new job arrives, or an existing job completes (or the controller may be invoked less frequently, for example, once every five arrival or completion events to reduce repartition overhead), the MF-MARL model deployed at the central controller reads the state distributions of jobs and MIGs and outputs reallocation instructions to guide MIG repartitioning. Once the configuration of each MIG is fixed, the problem reduces to a multi-machine scheduling problem, where we employ a carefully designed scheduling algorithm to assign jobs across different slices of different GPUs. We first present scheduling algorithms under static configurations in Section \ref{sec:algo}, followed by the central controller that directs dynamic MIG repartitioning in Section \ref{sec:controller}.

\subsection{Scheduling Algorithms}\label{sec:algo}
We design algorithms that leverage the job throughput characteristics and the concave power curve of GPUs when making assignment decisions.
Unlike the similar single-machine problem studied in \cite{lipe2025energy}, the presence of multiple GPUs significantly complicates scheduling decisions, necessitating the use of more intelligent scheduling algorithms. The pseudocodes of algorithms depicted here outline the choices of slices and GPUs, and not the entire scheduling algorithms. The jobs in each of them are ordered by earliest deadline first (EDF).

We consider the following algorithms:
\begin{enumerate}
    \item \textbf{EDF Most Packed (EDF)}: The jobs in the ready queue are sorted in increasing order of deadlines, and each job is assigned one at a time to the slowest slice of the most fractionally packed GPU that still finishes the job on time. If all the GPUs are occupied, then the job is assigned to the GPU slice with the fastest job completion time.
\begin{algorithm}[H]
\caption{EDF Most Packed (EDF)}
\begin{algorithmic}[1]
  \State Sort GPUs by occupied slices (desc), Jobs by deadline (asc)
  \For{job in Jobs, gpu in GPUs}
  \For{slice in free slices of gpu}
    \State $t \gets now + p_{job,slice}$
    \If{$t \leq d_{job}$} \State assign job to $(gpu,slice)$; \Return \EndIf
    \If{$t < best\_finish$} \State best\_choice $\gets (gpu,slice)$ \EndIf
  \EndFor
  \EndFor
  \State assign job to best\_choice (earliest finish)
\end{algorithmic}
\end{algorithm}
    \item \textbf{EDF Marginal ET (MET)}: The jobs in the ready queue are sorted in increasing order of deadlines, and each job is assigned to the GPU that has the lowest marginal $ET$ gain, where $ET$ is defined in equation \ref{eq:et}. The marginal $ET$ is calculated by assuming that no more jobs will be scheduled on that GPU.

\begin{algorithm}[H]
\caption{EDF Marginal ET}
\begin{algorithmic}[1]
  \State Sort Jobs by deadline (asc)
  \For{job in Jobs}
  \State tardiness $\gets \max(0, now + p_{job,slice} - d_{job})$
  \State marginal $\gets$ extra energy if the slice used until finish
  \State chosen $\gets \arg\min_{s \in \text{free slices}}\dfrac{a \cdot \text{marginal} + \text{tardiness}}{a+1}$
  \State assign job to chosen slice
  \EndFor
\end{algorithmic}
\end{algorithm}

    \item \textbf{Categorical EDF (CEDF)}: This algorithm classifies jobs into four categories, ranging from least sublinear to most sublinear. The GPUs are sorted in decreasing order of the average slice size. The jobs are first sorted by these categories and within each category, by deadlines. They are then packed GPU-wise into the slowest slice that completes the job on time, otherwise the fastest slice. The procedure for choosing the slice is similar to EDF Most Packed, but the GPUs are ordered by average slice size rather than fractional utilization.
        \begin{algorithm}[H]
\caption{Categorical EDF (CEDF)}
\begin{algorithmic}[1]
\Procedure{ClassifyJob}{job}
\State
$\begin{cases}
 \text{if } thr(4g) < 0.8 \cdot thr(7g), thr\_class = 1 \\
  \text{elif } thr(2g) < 0.8 \cdot thr(7g), thr\_class = 2 \\
  \text{elif } thr(1g) < 0.8 \cdot thr(7g),  thr\_class = 3 \\
  \text{otherwise},  thr\_class = 4
\end{cases}$
\EndProcedure
\State Sort GPUs by average slice size (desc)
\State Sort Jobs by $thr\_class$ and deadline (asc)
\State Assign Jobs using steps 2-13 of Algorithm 1
\end{algorithmic}
\end{algorithm}
\end{enumerate}

\subsection{MF-MARL based Central Controller}\label{sec:controller}
As the MIGs are identical, indistinguishable, and interchangeable, the key is to check whether a valid MIG configuration exists for a given task, and not which MIG executes it. The jobs share the same property: we are concerned with the characteristics of pending jobs, not their individual identities. So, for any job, as long as the GPU configuration is appropriate, the execution outcome remains the same regardless of which GPU processes it; likewise, given a fixed GPU configuration, jobs with the same characteristics yield identical results.

This symmetry motivates the use of Mean-Field MARL, where the state information of all GPUs and jobs can be represented as distributions. In practice, GPUs can take on 12 possible configurations, denoted $c_1,\cdots,c_{12}$. The current configuration of each MIG defines its state. We define this moment of arrival or completion of a job as step $t$. Assuming there are $m$ MIGs in total, the state of the $i^{th}$ MIG is denoted as $s^{i,cfg}_t$ for each $i \in [m]$. The state distribution of the MIGs can then be computed as:
$\mu^{MIGs}_{t}(c_q) = \frac{\sum_{i=1}^{m} \mathbf{1}\!\left( s_{t}^{i,cfg} = c_q \right)}{m}.$

The state of a job is defined by its duration and deadline, denoted for the $j^{th}$ job as $(s^{j,du}_t,s^{j,dl}_t)$ in step $t$. Since these states are continuous, they must be discretized in order to be represented as distributions. Specifically, let $du_1,\cdots,du_V$ denote the discretized intervals of job duration. Then $s^{j,du}_t=du_v$ if and only if the duration of job $j$ falls into the interval $du_v$. Similarly, let $dl_1,\cdots,dl_O$ represent the discretized intervals of job deadlines. To simplify the problem, we consider the job duration and deadline distributions separately. They can be computed as:
$\mu^{Du}_{t}(du_v) = \frac{\sum_{j=1}^{n} \mathbf{1}\!\left( s_{t}^{j,du} = du_v \right)}{n}$ and 
$\mu^{DL}_{t}(dl_o) = \frac{\sum_{j=1}^{n} \mathbf{1}\!\left( s_{t}^{j,dl} = dl_o \right)}{n}$, assuming that there are $n$ jobs, 

Following the MF-MARL paradigm described in Section \ref{sec:mf-marl}, the central controller takes the above distributions as input and outputs a policy $h_t$ applicable to all MIGs, guiding their repartitioning. For any given MIG, this policy takes its current configuration as input and returns a probability distribution over the next possible configurations. The next configuration of the MIG is then sampled from this distribution. The form of the policy can be expressed as follows:
\begin{equation}\label{eq:pmat}
h_t=
\begin{pmatrix}
p(c_{1}\mid c_{1}) & \cdots & p(c_{12}\mid c_{1}) \\
\vdots & \ddots & \vdots \\
p(c_{1}\mid c_{12}) & \cdots & p(c_{12}\mid c_{12})
\end{pmatrix},
\end{equation}
where $p(c_{q}\mid c_{p})=p(s_{t+1}^{i,cfg}=c_{q}\mid s_{t}^{i,cfg}=c_{p})$ for all $i \in [m]$.

After the central controller issues repartitioning instructions, the configurations of all MIGs are updated based on the transition distribution corresponding to their current state. We adopt top-$k$ sampling, a widely used technique in large language models, to improve the system stability. Specifically, top-$k$ sampling selects the $k$ configurations with the highest probabilities from the distribution, re-normalizes their probabilities, and then randomly samples the next configuration from this restricted set. The system then applies the scheduling algorithm to assign jobs across different slices, and the GPUs process these tasks until either a new job arrives or an existing job is completed. At this point, the central controller receives a reward defined as
\begin{equation}\label{eq:reward}
r_{t} = -\frac{ \text{tardiness}_{t} + a\times\text{energy\_consumption}_{t}}{1+a},
\end{equation}
where $\text{tardiness}_{t}$ and $\text{energy\_consumption}_{t}$ denote the new job delay and the energy cost during this processing interval. The parameter $a$ is set as discussed earlier. During all timesteps, the cumulative reward is set to $-ET$. Therefore, the reward of the central controller is directly related to the downstream objectives. This stepwise reward formulation alleviates the issue of reward sparsity, thereby facilitating effective learning.

Then, we use PPO to train the model to maximize the expected cumulative reward. We know that there is a 4-second repartitioning penalty \cite{li2022miso}, which is incorporated into the model indirectly by reducing the deadlines by this amount. 

\section{Lower Bounds}\label{sec:lower_bounds}

In this section, we show lower bound results for energy and tardiness. We first \emph{linearize} the sublinear jobs to their linear counterparts and then show that running each of them on $7g$ MIGs yields lower bounds for both energy and tardiness. 

For the energy lower bound, we reduce our problem to a polynomial-time solvable makespan minimization problem, and for the tardiness lower bound, we use a mixed integer program, which is numerically approximated (it is NP-hard).

\begin{definition}
    For a job $j$ with processing time $p_{j,1}$ on $1g$, we define its linearization as a job $L(j)$ with processing times $p_{j,i} = p_{j,1}/i$ for each $i \in \{2,3,4,7\}$.
\end{definition}

For a collection of jobs $J$, let $L(J)$ represent the corresponding collection of linearized jobs.

\subsection{Energy Lower Bound}

\begin{lemma}\label{lem:linear}
    Any schedule $\Pi$ with a combination of linear and sublinear throughput jobs $J$ can be converted to a corresponding schedule $\Pi'$ on the linearized job set $L(J)$ without increasing the energy.
\end{lemma}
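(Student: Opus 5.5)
We convert $\Pi$ into $\Pi'$ by keeping every slice allocation fixed and only shortening the occupation intervals. For a job $j$ in $J$, consider each maximal time interval during which $\Pi$ runs $j$ on some slice of size $k$. The \emph{work} done on $j$ in such an interval of length $\ell$ is the fraction $\ell/p_{j,k}$ of the job. Because the throughput of $j$ is linear or sublinear in slice size, we have $p_{j,k} \ge p_{j,1}/k$ for every $k \in \{1,2,3,4,7\}$. This inequality is exactly what the definition of linear and sublinear (including capped) throughput means: throughput on a $kg$ slice is at most $k$ times the $1g$ throughput.

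In $\Pi'$ we run $L(j)$ on the same GPU and the same slice, starting at the same time, but only for $\ell' = \ell \cdot p_{L(j),k}/p_{j,k} = \ell \cdot \frac{p_{j,1}/k}{p_{j,k}} \le \ell$ time units. This completes the same fraction of $L(j)$ as $\Pi$ completed of $j$. The slice is then left idle for the remaining $\ell - \ell'$. Summing over all intervals, $L(j)$ receives total work fraction $1$, so it is completed in $\Pi'$. Linear jobs are unchanged, since for them $\ell'=\ell$. Feasibility is preserved because we never use a slice at a time it was not used in $\Pi$. The partition configurations, and hence any repartitioning events, are identical. Release times are respected because every interval starts no earlier than in $\Pi$.

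Next we compare energy. At every time $\tau$ and on every GPU $g$, the set of slices busy in $\Pi'$ is a subset of the set busy in $\Pi$. Hence the number of active slices $u'_g(\tau) \le u_g(\tau)$. The power vector $P=[40,119,\dots,248.5]$ is nondecreasing, so $P[u'_g(\tau)] \le P[u_g(\tau)]$ pointwise. Integrating over time and summing over GPUs gives energy$(\Pi') \le$ energy$(\Pi)$.

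The main subtlety is the accounting of active slices when slices are fused. If $P$ is indexed by the number of utilized $1g$-equivalent compute units (a busy $kg$ instance counting as $k$), the same pointwise subset argument still works, because idling an entire instance only lowers the count. One also has to be careful about the time horizon over which energy is integrated. If idle GPUs are charged the base power $40$ up to the overall makespan, then $\Pi'$ should be evaluated over the same horizon as $\Pi$, or over a shorter one. Its makespan is at most that of $\Pi$, so truncation only helps, again because $P$ is nonnegative. These monotonicity checks are the only nontrivial step; the rest is the per-interval rescaling above.
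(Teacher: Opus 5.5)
Your proposal is correct and takes the same route as the paper's one-line proof. You keep the start times and slice assignments of $\Pi$, use $p_{j,k} \ge p_{j,1}/k$ so that each $L(j)$ finishes no later, and conclude that the energy cannot increase; you also make explicit that only monotonicity of $P$ is needed here, whereas the paper invokes concavity, which is really only essential for the later consolidation onto $7g$ machines.
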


\begin{proof}
    The above result follows from the power curve's concavity and the throughput curves' sub-linearity, implying that the jobs' linear extensions will consume less energy running on a corresponding schedule with the same starting times as the original schedule (and will finish no later).
\end{proof}

From now on, we will consider all jobs to be linear and let the energy used by a schedule $\Pi'$ on a linearized job set $L(J)$ be $E^{\Pi'}(L(J))$.

\begin{lemma}\label{lem:single_machine}
    Any schedule $\Pi'$ on a linearized job set $L(J)$ on $m$ machines with any configuration can be converted to a schedule $\Pi''$ on $m$ $7g$ machines without increasing energy.
\end{lemma}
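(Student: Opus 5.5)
We will argue GPU by GPU. The energy of a schedule is the integral over time, summed over GPUs, of $P[k_i(t)]$, where $k_i(t)$ is the number of busy slices (in $1g$ units) on GPU $i$ at time $t$. Because the jobs in $L(J)$ are linear, a job running on a slice of size $k$ during an interval of length $\delta$ completes a fraction $k\delta/p_{j,1}$ of its work. That amount depends only on the number of $1g$ units used times the duration. So the quantity that is conserved when we rearrange work is "slice-time", and the plan is to repack each GPU's slice-time onto a full $7g$ instance.

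The first step is to fix a GPU $i$ and an interval $[t,t+\delta)$ during which $k=k_i(t)$ slices are busy. This interval contributes energy $P[k]\delta$ and processes $k\delta$ units of linear work. In $\Pi''$ we run the same work on GPU $i$ configured as one $7g$ instance, in a window of length $k\delta/7$. The work items are run sequentially on the $7g$ slice in the same order they appear, and this is allowed because of preemption. This window costs $P[7]\,k\delta/7$ active energy. For the rest of the original interval, of length $(1-k/7)\delta$, the GPU sits idle and costs at most $P[0]$ per unit time. Everything is compressed to the start of each interval, so every piece of each job is processed no later than in $\Pi'$, and completion times do not increase. No job is processed on two slices simultaneously, since within the window its pieces are serialized.

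The second step is the energy comparison. We need
\[
\frac{k}{7}P[7]+\Bigl(1-\frac{k}{7}\Bigr)P[0]\le P[k],
\]
which is exactly concavity of $P$ on $\{0,\dots,7\}$: $P[k]$ lies above the chord joining $(0,P[0])$ and $(7,P[7])$. We check this against the given values $P=[40,119,160,205.3,243.9,247.7,248.5,248.5]$, using the concavity the paper already invokes in Lemma~\ref{lem:linear}. For $k=0$ and $k=7$ the inequality holds with equality. Summing over intervals and GPUs then gives $E^{\Pi''}(L(J))\le E^{\Pi'}(L(J))$.

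The main obstacle is bookkeeping rather than the inequality. The release times must still be respected, and they are, because we only move work later within an interval in which the job was already running, or earlier but never before that interval's start. Accounting for idle power also needs care. If $\Pi'$ leaves a GPU fully idle, it already pays $P[0]$ there, and the argument above covers this as the case $k=0$. The remaining subtlety is whether the energy model charges idle GPUs at all. If idle GPUs are free, then $P[0]$ is effectively $0$, and the inequality needed becomes $kP[7]/7\le P[k]$. This follows from concavity together with $P[0]\ge 0$, so the argument goes through in either model.
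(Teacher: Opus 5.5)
Your proposal is correct and follows the paper's proof essentially step for step. Per GPU, you cut the schedule into segments with a fixed set of running jobs, serialize each segment's slice-time $S_it_i$ onto a single $7g$ instance in a window of length $S_it_i/7\le t_i$, and compare energies via the chord inequality $\frac{S_i}{7}P_7+\bigl(1-\frac{S_i}{7}\bigr)P_0\le P_{S_i}$; the paper writes this after subtracting the idle baseline $P_0T$ per machine, which is the same algebra.

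One small correction: that chord inequality follows from concavity but is weaker than it. The listed values are in fact not discretely concave ($P_2-P_1=41<45.3=P_3-P_2$), so the right justification is your direct numerical check of the chord inequality for $k=1,\dots,6$, not an appeal to concavity.
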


\begin{proof}
    Consider the first machine without loss of generality, with $\Pi'$ restricted to it being denoted as $\Pi'_1$. Let the finishing time of $\Pi'$ be $T$. For ease of calculation, we deduct $P_0T$ from the total energy $E^{\Pi'_1}(L(J))$ and account for it later.

    We divide the schedule $\Pi'$ on the first machine into $k$ segments, where a new segment begins whenever a job starts or completes (or there is a reconfiguration). Consider a segment $i \in [k]$, and let its duration be $t_i$ and starting time and finish times be $s_i$ and $f_i$ respectively. Let there be $l$ jobs running on slices of sizes $y_{ij}$ for each $j \in [l]$, such that the $\sum_j y_{ij} = S_i$. The power consumed by these jobs is $P_{S_i} t_i$. Since idle power must be deducted, active power during this time is $(P_{S_i}-P_0) t_i$. Construct a corresponding segment on $7g$ where the parts of the jobs in segment $i$ are scheduled consecutively, one after the other. Its duration will be $\frac{S_i t_i}{7}$ which is no greater than $t_i$. The active power consumed will be $(P_7-P_0) \frac{S_i t_i}{7} = P_7 t_i\frac{S_i}{7} +P_0t_i\left(1-\frac{S_i}{7}\right) - P_0 t_i \leq (P_{S_i}-P_0) t_i$, using the concavity of the power curve. Finally, since each new segment has no greater duration than the original segment, the reduction is feasible and cannot result in an increase in the active energy when these segments are combined. The idle power is $P_0T$. Since we are building a total schedule $\Pi''_1$ of duration $T$, it can be added in the end. Finally, repeating this process for all $m$ machines proves the lemma.
    \end{proof}

    \begin{lemma}
        To compute a lower bound on the total energy, we calculate the energy of a minimum makespan schedule using jobs $L(J)$ on $m$ $7g$ machines.
    \end{lemma}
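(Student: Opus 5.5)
The plan is to chain Lemma~\ref{lem:linear} and Lemma~\ref{lem:single_machine}, and then show that among all schedules of $L(J)$ on $m$ $7g$ machines, energy is minimized by a minimum makespan schedule.

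\textbf{Reduction.} Fix any feasible schedule $\Pi$ of $J$ on $m$ MIG-enabled GPUs with arbitrary (possibly changing) configurations. By Lemma~\ref{lem:linear}, $\Pi$ can be converted into a schedule $\Pi'$ of $L(J)$ whose energy is no larger. By Lemma~\ref{lem:single_machine}, $\Pi'$ can in turn be converted into a schedule $\Pi''$ of $L(J)$ on $m$ $7g$ machines, again with no larger energy. Hence
\[
\min_{\Pi} E^{\Pi}(J) \;\geq\; \min_{\Pi''} E^{\Pi''}(L(J)),
\]
where the right-hand minimum ranges over schedules on $m$ $7g$ machines respecting the release times $r_j$.

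\textbf{Energy on $7g$ machines.} On a $7g$ machine, each job $L(j)$ occupies the whole GPU for $p_{j,7}=p_{j,1}/7$. The energy of $\Pi''$ therefore splits into two parts:
\[
(P_7-P_0)\sum_j p_{j,7} \;+\; mP_0\,T(\Pi'').
\]
\begin{itemize}
\item The first term is the active energy. It is the same for every schedule, even with preemption, since total busy time is invariant.
\item The second term is the idle baseline paid on every machine over the horizon $T(\Pi'')$. Following the convention of Lemma~\ref{lem:single_machine}, the horizon is taken to be the finishing time of the schedule.
\end{itemize}
Since $P_0>0$, energy is a nondecreasing affine function of the makespan. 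It is therefore minimized exactly by a minimum makespan schedule of $L(J)$ on $m$ identical machines with release dates. Evaluating the displayed expression at that optimal makespan $C^*_{\max}$ gives a valid lower bound on the energy of any schedule of $J$.

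\textbf{Main obstacle: the time horizon.} The delicate point is a consistent choice of $T$.
\begin{itemize}
\item The conversion in Lemma~\ref{lem:single_machine} preserves the original horizon $T$. The converted schedule $\Pi''$ may finish earlier, yet it is still charged idle power up to $T$.
\item So I must argue that the energy actually charged to $\Pi''$ satisfies
\[
(P_7-P_0)\sum_j p_{j,7}+mP_0T \;\geq\; (P_7-P_0)\sum_j p_{j,7}+mP_0C^*_{\max}.
\]
This follows because $\Pi''$ is feasible for $L(J)$, so $T\ge T(\Pi'')\ge C^*_{\max}$.
\item If instead the horizon is fixed externally (for example, the simulation end), then the idle term is a constant and the bound holds trivially.
\end{itemize}

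\textbf{Computability.} Preemption is allowed, and the release dates come from the online arrivals. The resulting problem, $P\,|\,r_j,\mathrm{pmtn}\,|\,C_{\max}$, is polynomially solvable (for instance by an LP or flow formulation, or Horn's algorithm). This justifies the claim that the bound is computable.
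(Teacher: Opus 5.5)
Your proposal is correct and follows the paper's argument: after reducing to $m$ $7g$ machines via Lemmas~\ref{lem:linear} and~\ref{lem:single_machine}, the energy is the schedule-invariant active term $(P_7-P_0)\sum_j p_{j,7}$ plus $mP_0$ times the makespan, so substituting the minimum preemptive makespan with release dates gives the bound. Your explicit check that the horizon $T$ inherited from Lemma~\ref{lem:single_machine} satisfies $T \ge C^*_{\max}$ fills in a step the paper leaves implicit.
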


\begin{proof}
     It can be shown that for jobs $L(J)$ running using any schedule $\Pi$ on $m$ $7g$ machines, the total energy consumption is given by adding the active energy to the idle energy:
\begin{equation}\label{eq:energy_lb}
    E^\Pi(L(J))=\sum_{j\in J}\frac{p_j}{7} P_7+(mC^\Pi-\sum_{j\in J}\frac{p_j}{7})P_0,
\end{equation}
where $[n]$ is the index set of jobs and $C^\Pi$ is the makespan of a given schedule $\Pi$. 
Replacing $C^\Pi$ with the minimum makespan $C^*$ in (\ref{eq:energy_lb}) yields the energy lower bound (the active energy cannot be decreased, and minimizing the makespan minimizes the idle energy).
Note that $C^*$ can be computed in $O(n^2)$ time using the staircase algorithm for makespan minimization with preemption on uniform machines from \cite{labetoulle1984preemptive}.
\end{proof}

\subsection{Tardiness Lower Bound}

It can be shown that Lemmas \ref{lem:linear} and \ref{lem:single_machine} also hold for tardiness (using similar proofs). In this subsection, we replace $J$ with $L(J)$ and derive a lower bound result on the tardiness of $L(J)$. We frame our problem as a parallel identical-machine scheduling problem and develop a mixed integer program (MIP) with a just-in-time formulation. 

Let there be $m$ machines and $n$ linearized jobs. For each job, $j \in [n]$, let the release time be $r_j$, the processing time be $p_j$, and the deadline be $d_j$. After setting an $\epsilon$ that indicates the time step size, we set the maximum time horizon at $\tau = \lfloor (\max_j r_j+\sum_j p_j)/\epsilon \rfloor$. The time horizon is discretized into time slots from $0$ to $\tau$ with time slot $s$ corresponding to $[s\epsilon, (s+1)\epsilon)$.

\begin{itemize}
    \item Let $x_{j,s} \geq 0$ be the amount of job $j$ processed in time slot $s$ for each $j \in [n]$ and for each $s \in [0,\tau]$.
    \item Let $y_{j,s} \in \{0, 1\}$ be a binary variable that is $1$ if job $j$ has been completed until time slot $s$, and $0$ otherwise.
    \item Let $C_j$ be the completion time of job $j$ for each $j \in n$.
    \item Let $T_j = \max\{0, C_j - d_j\}$ be the tardiness of job $j$ for each $j \in [n]$.
\end{itemize}

\textbf{Objective}: $\min \sum_{j=1}^n T_j$ 

\textbf{Constraints:} 
\begin{enumerate}
    \item Processing requirement: $\sum_{s=0}^\tau x_{j,s} \leq p_j \quad \forall j \in [n]$.
    \item Machine capacity: $\sum_{j=1}^n x_{j,s} \leq m \epsilon \quad \forall s \in [0, \tau]$.
    \item Job processing rate: $x_{j,s} \leq \epsilon \quad \forall j \in [n], \forall s \in [0, \tau]$.
    \item Release time: $x_{j,s} = 0 \quad \forall j, \forall s < \lfloor r_j /\epsilon \rfloor$.
    \item Completion indicator: $y_{j,s} \leq y_{j,s+1} \quad \forall j, \forall s \in [0, \tau - 1]$ and $y_{j,t} \leq \frac{\sum_{s=0}^t x_{j,s}}{p_j} \quad \forall j, \forall t \in [0, \tau]$.
    \item Completion time: $C_j = \epsilon \sum_{s=0}^\tau (1 - y_{j,s}) \quad \forall j \in [n]$.
    \item Tardiness: $T_j \geq C_j - d_j, ~T_j \geq 0 \quad \forall j \in [n]$.
    \item Domain constraints: $y_{j,s} \in \{0, 1\}, \quad x_{j,s} \geq 0 \quad \forall j \in [n], \forall s \in [0, \tau]$.
\end{enumerate}

The constraints ensure $y_{j,t}$ is non-decreasing and switches from $0$ to $1$ when $\sum_{s=0}^t x_{j,s} = p_j$ because the tardiness minimization in the objective pushes the indicator to be 1 as soon as possible (if the completion time is after the due date). $C_j = \epsilon \sum_{t=0}^\tau (1 - y_{j,t})$ counts the number of slots where $y_{j,t} = 0$, multiplied by the slot width, equal to $\epsilon$ times the smallest $t$ where $y_{j,t} = 1$, which approximates when job $j$ is completed. This procedure uses $O(n\tau)$ variables and constraints, and lower-bounds job tardiness with an error up to twice the time slot width (errors arising from both release time and completion time discretization). Hence, the lower bound can underestimate the actual tardiness by at most $2n \epsilon$ when considering $n$ jobs. 

However, in practice, for both energy and tardiness, there is additional error from the jobs being sublinear in throughput. But since most common inference jobs are observed to be of linear throughput \cite{tan2021serving}, we observe that we are empirically not too far from both the lower bounds (refer to Section \ref{sec:exp}).

\section{Experimental Results}\label{sec:exp}
Here, we explain the processes that generate our workloads and the parameters that we vary to perform experiments on MIGs to demonstrate improvements in energy and tardiness.

\subsection{Experimental Settings}
The jobs and machines are as defined previously. For the MIG experiments without repartitioning, we assume that the workload release times are simulated from a fixed-rate Poisson arrival process. We then split them into training and inference jobs. The processing time characteristics of the workloads are generated from a lognormal distribution \cite{li2004workload} for training jobs and an exponential distribution for inference jobs \cite{cruzes2025revolutionizing}, with their slicewise processing times determined by using the Resnet-50 and BERT-Base throughputs from \cite{zhang2023migperf} and \cite{tan2021serving}. Since deadline information is not generally available in data center traces, we model it as $d_j \sim Unif(1, 1.5) * p_{j, 7}$, which is similar to the deadline distribution in \cite{zhang2025deadline}.


For our repartitioning experiments, we generate workloads with a base arrival rate that varies during the day \cite{weng2022mlaas,lipe2025energy} as in Figure \ref{fig:Variation of Arrival Rate Throughout 24-Hour Day}, which is multiplied by a multiplier. Next, we split each of these workloads into training and inference types in a fixed proportion, and again consider their throughput characteristics, as in the no-repartitioning case, based on the experiments in \cite{tan2021serving, espenshade2024characterizing, zhang2023migperf}.
 \begin{figure}[ht] 
     \centering 
     \includegraphics[width=0.9\linewidth]{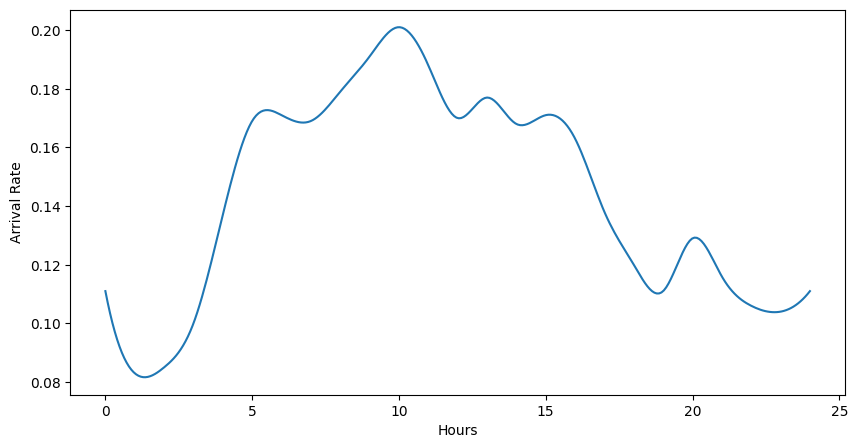} 
     \caption{Arrival rate variation during a day} 
     \label{fig:Variation of Arrival Rate Throughout 24-Hour Day} 
 \end{figure}
 

We do not consider distributed sliced training because, according to the NVIDIA documentation \cite{nvidiaUserGuide}, MIG-enabled GPUs do not have P2P communication, which is necessary for such tasks, and it is beyond our scope.

Using the described experimental setup, we evaluate the performance gains of MIGs compared to standard GPUs without MIG. To emulate multi-tenant, large-scale workloads with high job counts and short-lived tasks, we set the job arrival rate to be $20$ times the base rate. Under such conditions, non-partitioned GPUs fail to keep up, while MIGs achieve $60-100 \times$ lower tardiness at high arrival rates (Table \ref{tab:no_mig}), since most training jobs have sublinear throughput curves that degrade efficiency and cause severe backlog. Although the performance of non-MIG GPUs can be partially improved by dedicating some GPUs to inference and others to training (still yielding a $3-4 \times$ improvement), their performance remains significantly worse than that of MIG-enabled GPUs. These results strongly motivate our focus on MIGs.

\begin{table}[h]
\centering
\caption{Performance comparison under different arrival rates}
\label{tab:no_mig}
\begin{tabular}{|l|l|c|c|c|c}
\hline
\textbf{} & \textbf{} & \textbf{20x} & \textbf{16x} & \textbf{12x} \\
\hline
\multirow{2}{*}{Energy} 
  & SMART-MIG      &  2.5M & 2.3M  & 2M   \\
  & No Partition    & 3.1M & 2.4M  & 2M    \\
\hline
\multirow{2}{*}{Tardiness} 
  & SMART-MIG       & 1.0037 &  0.44 & 0.18  \\
  & No Partition    & 96.67 &  16.58& 0.34   \\
\hline
\end{tabular}
\end{table}

We compare the energy and tardiness from our repartitioning algorithm with the best scheduling algorithm in practice, as well as with the lower bound, and with the naive results obtained by scheduling on GPUs with fixed configurations.

\subsection{Results for Scheduling Algorithms without Repartitioning}
We first evaluate different scheduling algorithms without repartitioning by running experiments with varying (but fixed per run) arrival rates and configurations, aiming to characterize their relative performance, strengths, and suitable scenarios. Each job queue contains $1000$ jobs with a $4:1$ inference–training job ratio. We then simulate a full day of jobs (as in Section VI.C) to identify the best-performing algorithm, which is subsequently used to train the central controller model. Experiments are conducted on 8 GPUs without penalizing preemption. We compare the three algorithms from Section \ref{sec:algo} against a simple ‘Random’ baseline that assigns the earliest-deadline job to a randomly available slice.

\begin{itemize}
\item \textbf{Effect of varying arrival rates:} At lower arrival rates with smaller queues, we generally see Most Packed EDF performs well in terms of energy (Figure \ref{fig:arrivalrate1}), whereas at higher arrival rates, Marginal ET performs the best (Figure \ref{fig:arrivalrate8}). The energy is about $100\%$ worse than the lower bound at arrival rate 1, whereas it is only $20-30\%$ worse for the others, mainly because the loose lower bound introduces larger errors and we cannot pack as many GPUs efficiently at such a low arrival rate.

\begin{figure}[ht] 
    \centering 
    \includegraphics[width=\linewidth]{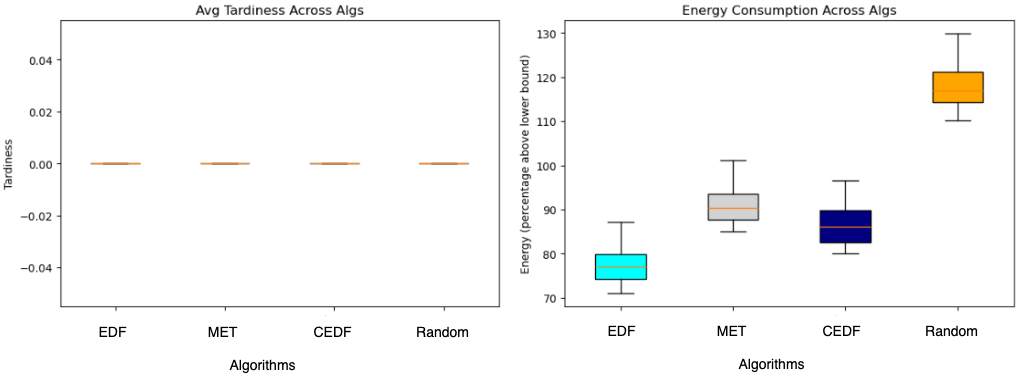} 
    \caption{Average tardiness and energy consumption at arrival rate 1} 
    \label{fig:arrivalrate1} 
\end{figure}

\begin{figure}[ht] 
    \centering 
    \includegraphics[width=\linewidth]{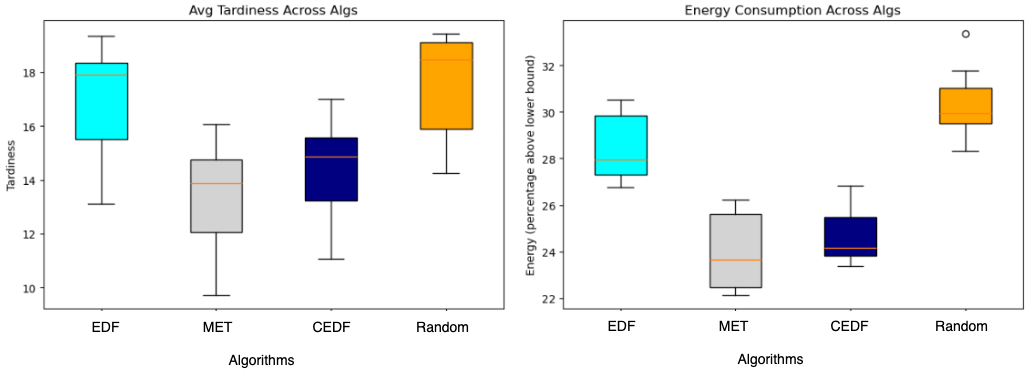} 
    \caption{Average tardiness and energy consumption at arrival rate 8} 
    \label{fig:arrivalrate8} 
\end{figure}

\item \textbf{Effect of varying the configuration:} This change is most prominent for CEDF as it sorts GPUs by average slice size (Figure \ref{fig:config1}), and hence benefits from having partitions with all slices having similar sizes. With such `good' configurations, CEDF performs the best in terms of energy and tardiness.

\begin{figure}[ht] 
    \centering 
    \includegraphics[width=\linewidth]{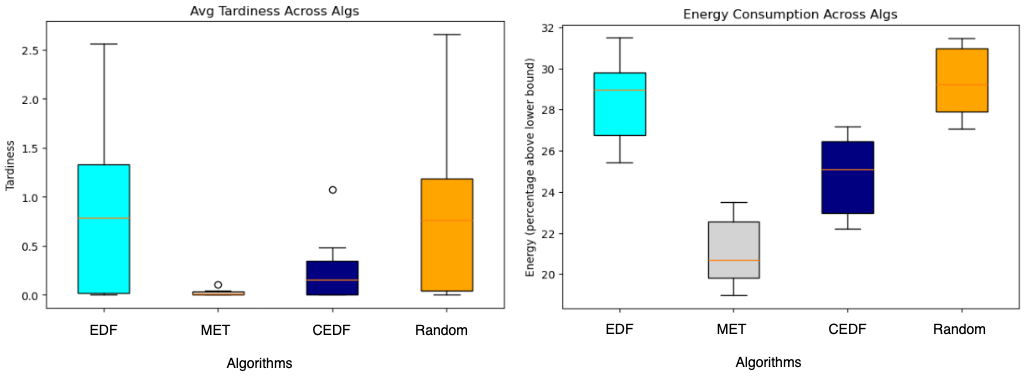} 
    \caption{Average tardiness and energy consumption for configurations where slice size varies within a GPU} 
    \label{fig:config1} 
\end{figure}

\begin{figure}[ht] 
    \centering 
    \includegraphics[width=\linewidth]{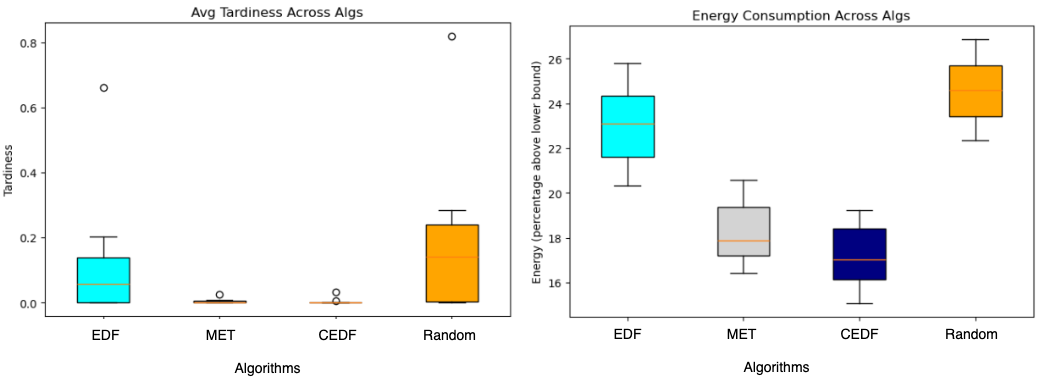} 
    \caption{Average tardiness and energy consumption for configurations where slice sizes are similar within a GPU} 
    \label{fig:config2} 
\end{figure}

\item \textbf{Results on realistic 24-hour queues:} These queues have varying arrival rates through the queues, and are also longer, with over 4000 jobs. Here, we find that CEDF generally performs best (Figure \ref{fig:realistic1}),  even when tested on varying configurations.

\begin{figure}[ht] 
    \centering 
    \includegraphics[width=\linewidth]{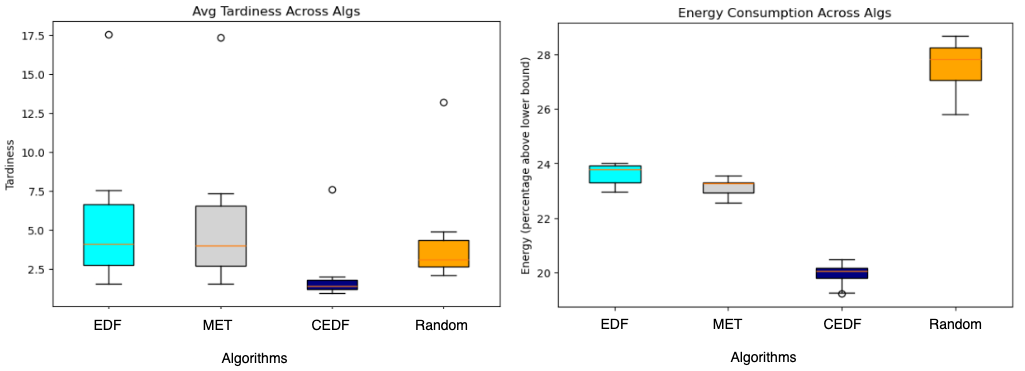} 
    \caption{Average tardiness and energy consumption for realistic 24-hour queues} 
    \label{fig:realistic1} 
\end{figure}
\begin{figure}[ht] 
    \centering 
    \includegraphics[width=\linewidth]{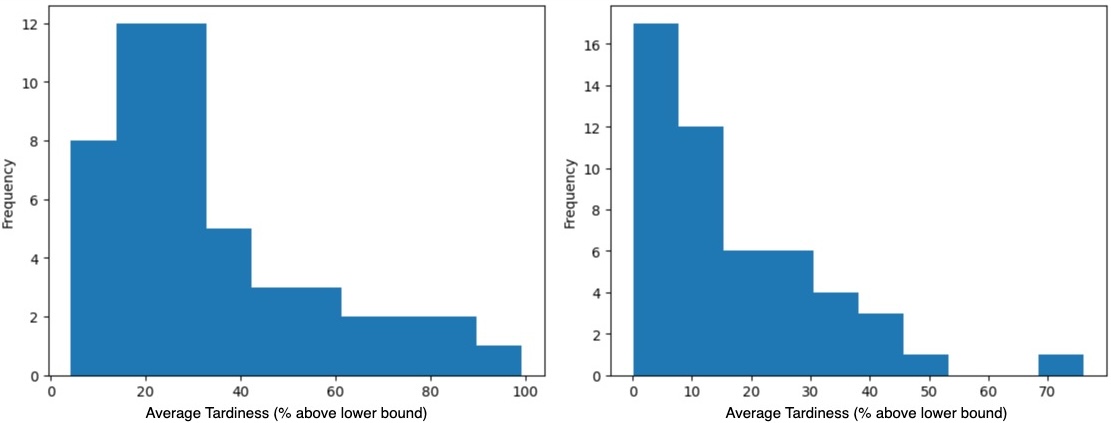} 
    \caption{Average tardiness compared with lower bounds} 
    \label{fig:Average tardiness vs lower bounds on 2g2g3g} 
\end{figure}

\item \textbf{Tardiness lower bound results:} Since computing the tardiness lower bound is computationally complex, we present tardiness results for running 50 simulations of 50 mixed throughput jobs and 50 linear throughput jobs, each using MP-EDF on $4$ GPUS with configuration 8 and arrival rate 8 (Figure \ref{fig:Average tardiness vs lower bounds on 2g2g3g}). The average tardiness is above the lower bound by $\sim 34\%$ for mixed throughput jobs and $\sim 17\%$ for linear throughput jobs, on average.

\end{itemize}
\subsection{Results for SMART-MIG}
In our evaluation, we consider a node with 8 MIGs and set the job arrival rate to 20 times the base rate. Throughout these experiments, the ratio of training-to-inference jobs is fixed at $1:4$. For SMART-MIG, the central controller is invoked every 5 steps with top-3 sampling (as explained in Section \ref{sec:controller}). Once the configurations are fixed, we use CEDF, which generally yields the best energy–tardiness trade-off among the algorithms we tested. The central controller model is trained on  $\sim 5M$ jobs, equivalent to 5000 days of workload at this arrival rate. We select the static configuration [1, 1, 2, 3, 3, 5, 5, 10] as a baseline for comparison with SMART-MIG since we find that CEDF performs best with it.

Figure \ref{fig:ppo_result} shows that SMART-MIG reduces average tardiness by $25\%$ compared to CEDF with a static configuration and by $40\%$ compared to EDF with a static configuration. In terms of energy consumption, SMART-MIG achieves improvements of $1.2\%$ over CEDF and $7\%$ over EDF. Evaluated with the $ET$ metric, SMART-MIG delivers overall gains of $18\%$ and $32\%$ relative to CEDF and EDF, respectively. It is important to note that in this study, SMART-MIG was trained with CEDF as the scheduling algorithm, though it can, in principle, be paired with any mature multi-machine scheduler. The observed improvement of SMART-MIG over its static configuration counterpart stems from its ability to dynamically repartition MIGs based on the evolving job queue, allowing the system to better handle varying urgency and throughput demands. In the figure, the red line marks the theoretical lower bound of average energy consumption. Comparing this lower bound with our three evaluated methods, we find that RL is $27\%$ worse and CEDF is approximately $29\%$ worse in terms of energy efficiency. Recall that this lower bound is not tight, as the calculation involves a linearization of jobs, which underestimates the true energy lower bound. Moreover, our focus is on multi-objective optimization, aiming to balance energy consumption with job tardiness. Under these considerations, we argue that all three algorithms achieve reasonably effective energy optimization.

We also notice that SMART-MIG cannot substantially reduce energy consumption via repartitioning. A plausible explanation is: given the requirement to maintain low job tardiness, the baseline CEDF without repartitioning is already close to optimal and leaves little room for further improvement. However, the improvement in tardiness is quite large.

\begin{figure}[ht] 
    \centering 
    \includegraphics[width=\linewidth]{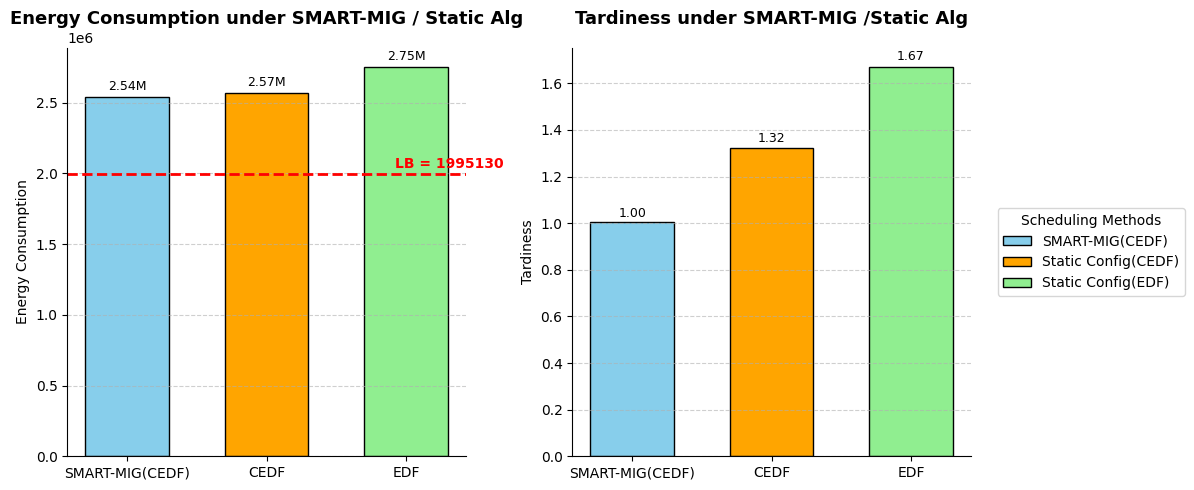} 
    \caption{Energy consumption and average tardiness results} 
    \label{fig:ppo_result} 
\end{figure}

\begin{figure}[ht] 
    \centering 
    \includegraphics[width=\linewidth]{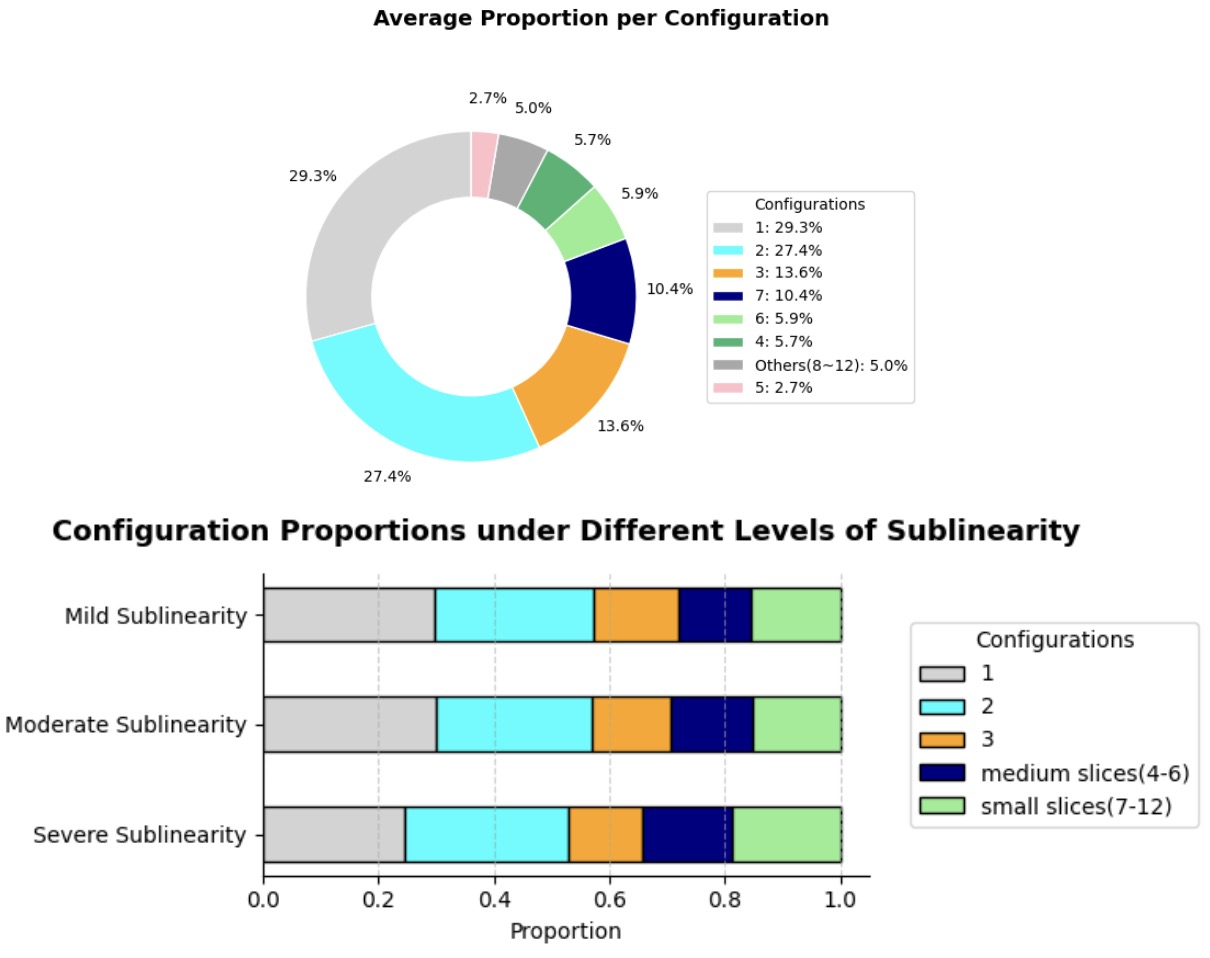} 
    \caption{Configuration characterization while repartitioning} 
    \label{fig:behavior} 
\end{figure}

Next, we analyze the repartitioning behavior of SMART-MIG. It is clear from \cite{lipe2025energy} that if all jobs were perfectly linear, then MIGs would provide no benefit. MIGs become meaningful only when sublinear jobs are present, since in such cases, the system must balance energy and tardiness. Thus, effective repartitioning essentially amounts to intelligently scheduling jobs with different throughput curves.

In our experiments, the sublinearity scores (as defined in section IV) of the job groups encountered by SMART-MIG at different steps range from $0.67$ to $0.84$. Based on this range, we categorize job groups into equally divided intervals:
\begin{itemize}
    \item Severe Sublinearity: $ss\in[0.67, 0.72]$
    \item Moderate Sublinearity: $ss\in[0.72, 0.78]$
    \item Mild Sublinearity: $ss\in[0.78, 0.84]$
\end{itemize}

The central controller encounters job groups with varying degrees of sublinearity across different experiments and timesteps. For each of these three categories, we measure the average repartitioning response of the system, namely the distribution of selected configurations. For comparison, we also report the global average distribution of configurations over all cases (Figure \ref{fig:behavior}).

The most frequently selected configurations are 1, 2, 3, and 7. Meanwhile, several configurations containing small slices (e.g., 8–12) collectively account for $5\%$, demonstrating that a reasonable GPU configuration should maintain a balance among large, medium, and small sliced GPUs. The specific balance should adapt dynamically to variations in job arrival rate and throughput. Furthermore, as the central controller encounters an increasing number of sublinear jobs, it tends to select configurations with larger indices that contain more small slices. This strategy increases task parallelism without undermining overall job execution efficiency, thereby reducing latency. Such variations do not cause substantial deviation in the average proportion of each configuration, indicating stability in job processing.

Put differently, SMART-MIG smartly balances harder-to-complete jobs (those more prone to incurring delays) with easier workloads by dynamically adjusting the configurations.

\section{Conclusion and Future Work}
We propose SMART-MIG, a scalable MIG-based parallel computing framework that jointly optimizes job performance efficiency and data center energy consumption. To enable fair system evaluation, we establish theoretical lower bounds on energy and tardiness. Our experiments first demonstrate the superiority of MIGs over traditional GPUs under machine learning workloads, where the latter nearly break down (with 60-100x worse tardiness). We also design a scheduling algorithm (CEDF) that leverages the nonlinearity of the job and power curves, achieving performance within $\sim27\%$ of the energy lower bound. Our dynamic repartitioning approach based on MF-MARL, when combined with CEDF reduces the tardiness further by $\sim25\%$ while also decreasing the energy consumption. These results highlight the effectiveness of using MF-MARL with intelligent scheduling for achieving energy-tardiness gains in MIG-based systems. 

There are several promising future directions. Some of them include (i) applying RL for job scheduling in conjunction with repartitioning, (ii) further tightening the lower bounds (by avoiding linearization), (iv) extending our work to heterogeneous GPUs, and (iv) exploring geometrical job alignment within and across GPUs to minimize heat dissipation and the energy needed for cooling.

\section*{Acknowledgment}
This research is partially supported by funding from IBM. The authors thank Ellie Lipe for her contributions in the initial part of this project.

\bibliographystyle{ieeetr}
\bibliography{cite}

\twocolumn[%
{\begin{center}
\Huge
Appendix    
\end{center}}
]


\appendixAD

\section{Overview of Contributions and Artifacts}

\subsection{Paper's Main Contributions}

\begin{description}
\item[$C_1$] SMART-MIG framework (integrating ML and OR):
The paper proposes SMART-MIG, a large-scale MIG scheduling framework with two components: (a) using Mean-Field Multi-Agent Reinforcement Learning (MF-MARL) with Top-k sampling to repartition GPUs, and (b) designing EDF-based scheduling algorithms (without repartitioning) to assign jobs to MIG slices across multiple GPUs by leveraging job throughput characteristics and MIG’s concave power curve.
\item[$C_2$] Energy and tardiness lower bounds: The paper derives energy and tardiness lower bounds as references for evaluating the scheduling policies.
\item[$C_3$] Extensive experimental validation:
The paper conducts extensive experiments of the studied algorithms to validate effectiveness, including: performance deterioration of the traditional no-partition GPU model under high arrival rates; static repartitioning operating within $\sim 30\%$ of the energy and tardiness lower bounds on average; MF-MARL–based dynamic repartitioning yielding an additional $\sim 18\%$ improvement in energy–tardiness efficiency; and consistent advantages across diverse workloads (e.g., even at low arrival rates, reducing tardiness by $\sim 47\%$ vs. the no-MIG baseline).
\end{description}

\subsection{Computational Artifacts}
\begin{description}
\item[$A_1$] Scheduling \& Simulation Artifact
\item[$A_2$] MF-MARL Dynamic Repartitioning Artifact (Controller Training \& Inference)
\item[$A_3$] Lower-Bound Computation Artifact (Energy LB \& Tardiness LB)

\end{description}
DOI: https://doi.org/10.5281/zenodo.18663599
\begin{center}
\begin{tabular}{rll}
\toprule
Artifact ID  &  Contributions &  Related \\
             &  Supported     &  Paper Elements \\
\midrule
$A_1$   &  $C_1, C_3$ & Table 1 \\
        &        & Figure 6-13\\
\midrule
$A_2$   &  $C_1,C_3$ & Table 1 \\
        &        & Figures 6-13\\
\midrule
$A_3$   &  $C_2$ & Figures 11-12 \\
\bottomrule
\end{tabular}
\end{center}

\section{Artifact Identification}

\newartifact

\artrel
Reproduces the paper’s scheduling experiments for MIG-based multi-GPU job scheduling using the paper’s EDF-based heuristics and workload model.
\artexp
At lower arrival rates, Most Packed EDF tends to be good on energy; at higher arrival rates, MET tends to perform best; and on realistic 24-hour queues, CEDF largely performs best and is used downstream for SMART-MIG training.

\arttime
The reproduction time is mainly dominated by the artifact execution time; typically, a single simulation takes about 20 minutes.

\artin

\artinpart{Hardware} Any computer equipped with a CPU.
\artinpart{Software} The Python package stable-baselines3.
\artinpart{Datasets / Inputs}
Energy model uses the paper’s concave power values (A100-40GB, $250 W$ cap):
$P=[40,119,160,205.3,243.9,247.7,248.5,248.5]$. Release times follow a fixed-rate Poisson arrival process. Training job base durations follow a lognormal distribution. Inference job durations follow an exponential distribution. Deadlines are modeled as $d_j\sim Unif(1,1.5) p_{j,7}$.
\artcomp
Run simulation batches that mirror the paper’s structure:
\begin{itemize}
    \item Vary arrival rates (fixed per run), evaluate EDF / MET / CEDF (and Random baseline if desired) on the same workload draws. 
    \item Vary fixed GPU MIG configurations and compare algorithm sensitivity (the paper highlights configuration effects, especially for CEDF).
    \item Run realistic 24-hour queues with varying arrival rates through the day and longer queues (paper: “over 4000 jobs”). 
\end{itemize}
For each run, record:
\begin{itemize}
    \item Total energy $e$ (from the power curve aggregation).
    \item Average tardiness.
    \item ET objective.
\end{itemize}

\artout
Compare algorithms on (energy, tardiness) trade-offs across arrival rates and configurations; identify which algorithm is best under each regime (low rate vs high rate; 24-hour queues). Confirm that CEDF is a strong overall choice on realistic 24-hour queues and thus is a justified scheduler for the SMART-MIG controller experiments.

\newartifact

\artrel
Reproduces the paper’s SMART-MIG dynamic repartitioning results using Mean-Field MARL as a central controller that periodically changes the GPU partition distribution and cooperates with the scheduler
\artexp
\begin{itemize}
    \item SMART-MIG reduces average tardiness by $\sim 25\%$ vs CEDF static, and $\sim 40\%$ vs EDF static.
    \item SMART-MIG improves energy by $\sim1.2\%$ vs CEDF and $\sim7\%$ vs EDF.
    \item Overall ET metric improvement: $\sim 18\%$ vs CEDF and $\sim 32\%$ vs EDF.
\end{itemize}

\arttime
Training corresponds to processing $5M$ jobs (about 5000 simulated days) to learn the MF-MARL controller policy. It takes about 1200 minutes.

\artin

\artinpart{Hardware} High-performance CPU or possibly an entry-level GPU.
\artinpart{Software} The Python package stable-baselines3.
\artinpart{Datasets / Inputs} This part is the same as $A_1$.
\artcomp
Train the central controller:
\begin{itemize}
    \item Run the SMART-MIG loop where each step is a job arrival or completion event; every 5 steps, invoke controller, sample configuration updates with top-k sampling, then schedule jobs using CEDF
    \item Continue training until you reach the paper’s scale.
\end{itemize}

Evaluate SMART-MIG vs static baselines: Fix the evaluation workload setting. Compare against static scheduling baselines, including the paper’s chosen static configuration.

\artout
Verify the headline improvements (tardiness, energy, ET) relative to static EDF/CEDF baselines. Characterize configuration selection frequencies and how they shift with sublinearity severity (the paper reports most frequent configurations and qualitative trends).

\newartifact

\artrel
Reproduces the paper’s theoretical reference baselines by computing lower bounds for energy and tardiness used for benchmarking scheduling policies.
\artexp
\begin{itemize}
    \item Energy lower bound comparison: RL (SMART-MIG) reported $\sim 27\%$ worse than energy LB, and static CEDF $\sim 29\%$ worse, noting the lower bound is not tight due to linearization.
    \item The paper’s reported benchmarking example: average tardiness above the lower bound by $\sim 34\%$ for mixed throughput jobs and $\sim 17\%$ for linear throughput jobs in a specific evaluation setup.
\end{itemize}

\arttime
Typically, a single execution takes about 5 minutes for computing the energy lower bound, and 30 minutes for computing the tardiness lower bound.

\artin

\artinpart{Hardware} This part is the same as $A_1$.
\artinpart{Software} stable-baselines3 and ortools.
\artinpart{Datasets / Inputs} This part is the same as $A_1$.

\artcomp
\begin{itemize}
    \item Energy lower bound: compute energy of a minimum makespan schedule for job queue $L(J)$ on $m$ 7g machines. The minimum makespan can be computed in $O(n^2)$ using the staircase algorithm for preemptive makespan minimization on uniform machines, but we use a flow-based approach for an arbitrarily close approximation.
    \item Tardiness lower bound: solve the paper’s MIP using Google's OR-Tools.
\end{itemize}

\artout
Report LB values alongside algorithm outcomes (A1/A2), emphasizing the paper’s caution that LBs may be non-tight due to linearization and discretization. Reproduce the paper-style comparison plots/tables: algorithm energy vs energy LB; algorithm tardiness vs tardiness LB; and the derived ET comparisons. Compare the energy and tardiness to the theoretical lower bounds and confirm the reported gap.
\end{document}